\documentclass[12pt]{article}

\usepackage[margin=1in]{geometry}
\usepackage{CJKutf8, hyperref, amsmath, amssymb, amsthm, authblk, orcidlink}

\newtheorem{lemma}{Lemma}
\newtheorem{theorem}{Theorem}
\newtheorem{corollary}{Corollary}

\theoremstyle{definition}
\newtheorem{definition}{Definition}

\theoremstyle{remark}
\newtheorem*{remark}{Remark}

\DeclareMathOperator{\supp}{supp}
\DeclareMathOperator{\tr}{tr}
\DeclareMathOperator*{\E}{\mathbb E}

\usepackage[style=trad-unsrt, citestyle=numeric-comp, giveninits=true, doi=false, url=false, eprint=false, isbn=false]{biblatex}

\addbibresource{main.bib}

\begin{document}

\title{Random product states at high temperature equilibrate exponentially well}

\begin{CJK}{UTF8}{gbsn}

\author{Yichen Huang (黄溢辰)\orcidlink{0000-0002-8496-9251}\thanks{yichenhuang@fas.harvard.edu}}

\affil{Department of Physics, Harvard University, Cambridge, Massachusetts 02138, USA}

\maketitle

\end{CJK}

\begin{abstract}

We prove that for all but a measure zero set of local Hamiltonians, starting from random product states at sufficiently high but finite temperature, with overwhelming probability expectation values of observables equilibrate such that at sufficiently long times, fluctuations around the stationary value are exponentially small in the system size.

\end{abstract}

\section{Introduction}

Equilibration is the process where the value of a dynamic property becomes almost time independent in the sense of staying very close to a particular value \cite{GE16}. If the stationary value matches the thermal value (property of the Gibbs state with the same energy), then thermalization occurs. Thus, equilibration is a prerequisite for thermalization. A thermalizing system equilibrates by definition, but not vice versa. The most famous example of ``equilibration without thermalization'' is arguably many-body localization.

Not all quantum many-body systems equilibrate. A simple counterexample is a non-interacting multi-spin system whose Hamiltonian is a sum of on-site terms. If the initial state is a product state, each spin evolves (rotates) independently and thus never reaches equilibrium. However and of course, we expect that almost all quantum many-body systems (with local interactions) equilibrate. This can be rigorously established from the laws of quantum mechanics. It follows from a line of research over a long period of time \cite{Tas98, Rei08, LPSW09, Sho11, FBC17, WGRE19, HH19, Hua21PP} that

\begin{theorem} [informal] \label{thm:1}
For all but a measure zero set of local Hamiltonians, starting from a random product state, with overwhelming probability expectation values of observables equilibrate such that at sufficiently long times, fluctuations around the stationary value are exponentially small in the system size.
\end{theorem}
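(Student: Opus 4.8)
The plan is to decouple the statement into a deterministic equilibration estimate valid for generic Hamiltonians and a probabilistic lower bound on the effective dimension of a random product state, then glue the two together and convert the result into a long-time statement via Markov's inequality in the time variable. For the first part, fix $H=\sum_kE_k|E_k\rangle\langle E_k|$ and write an initial pure state as $|\psi\rangle=\sum_kc_k|E_k\rangle$ with $p_k=|c_k|^2$; its infinite-time average is $\omega=\sum_kp_k|E_k\rangle\langle E_k|$ and the stationary (equilibrium) value of an observable is $\langle A\rangle_\omega=\tr(A\omega)$. Expanding $\langle A\rangle_t-\langle A\rangle_\omega$ in the energy basis and averaging its modulus squared over $t$, the hypothesis that the nonzero gaps $E_j-E_k$ are pairwise distinct annihilates every off-diagonal cross term and leaves
\[
\overline{\bigl|\langle A\rangle_t-\langle A\rangle_\omega\bigr|^2}=\sum_{j\neq k}\bigl|\langle E_j|A|E_k\rangle\bigr|^2p_jp_k\le\|A\|_\infty^2\max_k p_k,
\]
where I used $p_jp_k\le p_j\max_\ell p_\ell$ and $\sum_k|\langle E_j|A|E_k\rangle|^2\le\|A\|_\infty^2$ \cite{Rei08,LPSW09,Sho11}; note the right-hand side is uniform over all $A$ once $\max_k p_k$ is controlled. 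Nondegeneracy of the gaps (and simplicity of the spectrum) is generic: within the finite-dimensional space of couplings of a fixed local interaction pattern, the violating Hamiltonians form a measure-zero set, as in the works cited in Theorem~\ref{thm:1}.

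The crux is the second part: for $|\psi\rangle=\bigotimes_{i=1}^n|\psi_i\rangle$ with the $|\psi_i\rangle$ independent Haar-random local states, I claim there is a constant $\epsilon>0$ depending only on the local dimensions with $\Pr[\max_k p_k\ge e^{-\epsilon n}]\le e^{-\Omega(n)}$. Condition on $H$, hence on the fixed orthonormal eigenbasis $\{|E_k\rangle\}_{k=1}^{D}$, $D=\prod_id_i=e^{\Theta(n)}$. For any fixed unit vector $v$ one has $\E[|\langle v|\psi\rangle|^2]=\langle v|\bigotimes_i(\mathbb 1_i/d_i)|v\rangle=1/D$, and, more usefully, $-\ln|\langle v|\psi\rangle|^2=\sum_i(-\ln|\langle v_i|\psi_i\rangle|^2)$ is a sum of independent nonnegative random variables, each with expectation at most $1+\ln d_i$ and an exponential upper tail. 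A Chernoff estimate on the \emph{lower} tail of this sum---the event $|\langle v|\psi\rangle|^2\ge e^{-\epsilon n}$ is a large downward deviation when $\epsilon$ is small---gives $\Pr[|\langle v|\psi\rangle|^2\ge e^{-\epsilon n}]\le e^{-c(\epsilon)n}$ with $c(\epsilon)\to\infty$ as $\epsilon\to0^+$. Choosing $\epsilon$ small enough that $c(\epsilon)>\tfrac1n\ln D$ and taking a union bound over the $D$ eigenstates proves the claim.

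Combining the two parts: off a measure-zero set of local Hamiltonians, with probability $1-e^{-\Omega(n)}$ over the product state the first bound reads $\overline{|\langle A\rangle_t-\langle A\rangle_\omega|^2}\le\|A\|_\infty^2e^{-\epsilon n}$ simultaneously for every observable $A$, and Markov's inequality in $t$ then shows that on all but an $e^{-\epsilon n/2}$ fraction of times $|\langle A\rangle_t-\langle A\rangle_\omega|\le\|A\|_\infty e^{-\epsilon n/4}$---i.e.\ at sufficiently long times the fluctuations about the stationary value are exponentially small in the system size. I expect the genuine obstacle to be the effective-dimension bound: the eigenbasis of a local Hamiltonian is far from Haar-random (its edge eigenstates can carry almost no entanglement), so the overlap estimate must hold for \emph{every} fixed unit vector, and the per-vector tail must beat a union bound over $D=e^{\Theta(n)}$ of them---tight enough that a crude moment/Markov estimate is barely insufficient and the sharp large-deviation rate is needed. (Temperature plays no role in this informal statement; rerunning the product-state concentration for a high-temperature ensemble, with correspondingly worse constants, is what yields the quantitative version announced in the abstract.)
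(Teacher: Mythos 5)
Your overall architecture---a deterministic time-averaged variance bound under the non-degenerate gap condition, genericity of that spectral condition, a concentration statement for random product states, and a Markov argument in time---is exactly the paper's decomposition of Theorem \ref{thm:1} into Theorems \ref{thm:equ}, \ref{thm:ndg} and \ref{thm:eff}. The gap is in the step you yourself identify as the crux. Your per-eigenvector tail bound rests on writing $-\ln|\langle v|\psi\rangle|^2=\sum_i\bigl(-\ln|\langle v_i|\psi_i\rangle|^2\bigr)$, which presupposes $v=\bigotimes_i v_i$: for a non-product $v$ the single-site vectors $v_i$ do not exist and the overlap does not factor into independent terms. But the eigenvectors of a generic local Hamiltonian---precisely the vectors you must control and union-bound over---are entangled (a product eigenbasis is essentially the non-interacting counterexample that fails to equilibrate), so the Chernoff/large-deviation estimate never gets started and the claimed rate $c(\epsilon)\to\infty$ is unsupported for the vectors that matter.

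The repair is a moment bound valid for \emph{arbitrary} fixed unit vectors, and this is where the paper does its only real work (Theorem \ref{thm:eff}, proved as a special case of Lemma \ref{l:eff}, following Ref.~\cite{HH19}). Because each single-site factor of the ensemble is a projective 2-design (true both for Haar-random qubit states and for the six Pauli eigenstates used in the paper), one has $\E\bigl[(|\psi\rangle\langle\psi|)^{\otimes2}\bigr]=\bigotimes_i\bigl(\Pi^{(i)}_\textnormal{sym}/3\bigr)\le 3^{-N}I$, hence $\E|\langle v|\psi\rangle|^4\le 3^{-N}$ for every unit vector $v$, entangled or not. Summing this over the $2^N$ eigenvectors bounds $\E[1/D^\textnormal{eff}_\psi]\le(2/3)^N$ directly, with no union bound and no sharp large-deviation rate needed; indeed, contrary to your closing remark, the ``crude'' second-moment estimate suffices even with a union bound, since $2^Ne^{2\epsilon N}3^{-N}$ is exponentially small for $\epsilon<\tfrac12\ln\tfrac32$. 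Markov's inequality over the randomness of $|\psi\rangle$ then gives $\Pr(D^\textnormal{eff}_\psi=e^{\Omega(N)})=1-e^{-\Omega(N)}$, which feeds into Theorem \ref{thm:equ} (your use of $\max_kp_k$ is compatible, since $\max_kp_k\le(\sum_kp_k^2)^{1/2}=1/\sqrt{D^\textnormal{eff}_\psi}$). With your factorization step replaced by this symmetric-subspace computation, the rest of your outline goes through and matches the paper.
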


\begin{remark}
This theorem has been extended to Floquet systems \cite{Hua24NJP}.
\end{remark}

In the thermodynamic limit, the energy density of a random product state approaches the mean energy density of the Hamiltonian (average of all eigenvalues divided by the system size) with overwhelming probability. Thus, Theorem \ref{thm:1} is a statement at infinite temperature.

The main contribution of this paper is to extend Theorem \ref{thm:1} to sufficiently high but finite temperature. Random product states at finite temperature can be defined by either
\begin{itemize}
\item a uniform distribution over a subset of product states whose energy corresponds to a given temperature (Definition \ref{def:mc}) or
\item properly defining the probability of each product state so that their mixture is a thermal state, as done in Ref.~\cite{BLMT24} (Definition \ref{def:can}).
\end{itemize}
We prove analogues of Theorem \ref{thm:1} for both definitions above.

\section{Preliminaries}

We use standard asymptotic notation. Let $f,g:\mathbb R^+\to\mathbb R^+$ be two functions. One writes $f(x)=O(g(x))$ if and only if there exist constants $C,x_0>0$ such that $f(x)\le Cg(x)$ for all $x>x_0$; $f(x)=\Omega(g(x))$ if and only if there exist constants $C,x_0>0$ such that $f(x)\ge Cg(x)$ for all $x>x_0$; $f(x)=\Theta(g(x))$ if and only if there exist constants $C_1,C_2,x_0>0$ such that $C_1g(x)\le f(x)\le C_2g(x)$ for all $x>x_0$; $f(x)=o(g(x))$ if and only if for any constant $C>0$ there exists a constant $x_0>0$ such that $f(x)<Cg(x)$ for all $x>x_0$.

\begin{definition} [non-degenerate gap]
The spectrum $\{e_j\}$ of a Hamiltonian has non-degenerate gaps if the differences $\{e_j-e_k\}_{j\neq k}$ are all distinct, i.e., for any $j\neq k$,
\begin{equation} \label{eq:ndg}
e_j-e_k=e_{j'}-e_{k'}\implies(j=j')~\textnormal{and}~(k=k').
\end{equation}
\end{definition}

Consider a system of $N$ qubits (spin-$1/2$'s) initialized in the state $|\psi\rangle$. Let $\{|j\rangle\}_{j=1}^{2^N}$ be a complete set of eigenstates of the Hamiltonian $H$.
\begin{definition} [effective dimension]
The effective dimension of $|\psi\rangle$ is defined as
\begin{equation}
\frac1{D^\textnormal{eff}_\psi}=\sum_{j=1}^{2^N}\big|\langle j|\psi\rangle\big|^4.
\end{equation}
\end{definition}

The time-averaged expectation value and fluctuation of an operator $B$ are
\begin{gather}
\bar B:=\lim_{T\to\infty}\frac1T\int_0^T\langle\psi|B(t)|\psi\rangle\,\mathrm dt,\quad B(t):=e^{iHt}Be^{-iHt},\\
\Delta B:=\lim_{T\to\infty}\frac1T\int_0^T\big|\langle\psi|B(t)|\psi\rangle-\bar B\big|^2\,\mathrm dt.
\end{gather}

Let $A$ be a subsystem of $L$ qubits and $\bar A$ be the rest of the system. Let
\begin{equation}
\psi_A(t)=\tr_{\bar A}(e^{-iHt}|\psi\rangle\langle\psi|e^{iHt}),\quad\bar\psi_A=\lim_{T\to\infty}\frac1T\int_0^T\psi_A(t)\,\mathrm dt
\end{equation}
be the reduced density matrix of subsystem $A$ at time $t$ and its time average.

\begin{theorem} [\cite{Rei08, LPSW09, Sho11}] \label{thm:equ}
If the spectrum of $H$ has non-degenerate gaps, then
\begin{equation}
\Delta B\le\|B\|^2/D^\textnormal{eff}_\psi,\quad\lim_{T\to\infty}\frac1T\int_0^T\|\psi_A(t)-\bar\psi_A\|_1\,\mathrm dt\le2^L/\sqrt{D^\textnormal{eff}_\psi}.
\end{equation}
\end{theorem}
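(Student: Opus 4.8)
The plan is to pass to an eigenbasis $\{|j\rangle\}$ of $H$, write $|\psi\rangle=\sum_jc_j|j\rangle$ with $p_j:=|c_j|^2$, and reduce both inequalities to bounding an infinite-time average --- henceforth written with an overbar --- of a sum of rank-one contributions; the non-degenerate-gap hypothesis is precisely what collapses each such double time average onto its diagonal. Note first that non-degenerate gaps forces a non-degenerate spectrum, since otherwise the value $0$ would be attained both by $e_j-e_k$ and by $e_k-e_j$ for some $j\neq k$.

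For the first inequality, expand $\langle\psi|B(t)|\psi\rangle=\sum_{j,k}c_j^*c_ke^{i(e_j-e_k)t}B_{jk}$ with $B_{jk}:=\langle j|B|k\rangle$. Since the spectrum is non-degenerate the time average annihilates every off-diagonal term, so $\bar B=\sum_jp_jB_{jj}$ and $\langle\psi|B(t)|\psi\rangle-\bar B=\sum_{j\neq k}c_j^*c_ke^{i(e_j-e_k)t}B_{jk}$. Squaring and averaging, the phase $e^{i(e_j-e_k-e_m+e_n)t}$ has mean $1$ exactly when $e_j-e_k=e_m-e_n$, which for $j\neq k$ and $m\neq n$ means $(j,k)=(m,n)$ by hypothesis; hence $\Delta B=\sum_{j\neq k}p_jp_k|B_{jk}|^2$. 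Now use $p_jp_k\le(p_j^2+p_k^2)/2$, symmetrize under $j\leftrightarrow k$, and bound $\sum_k|B_{jk}|^2=\langle j|BB^\dagger|j\rangle\le\|B\|^2$ (and likewise $\sum_k|B_{kj}|^2\le\|B\|^2$) to obtain $\Delta B\le\|B\|^2\sum_jp_j^2=\|B\|^2/D^\textnormal{eff}_\psi$.

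For the second inequality, the same dephasing gives $\psi_A(t)-\bar\psi_A=\sum_{j\neq k}c_jc_k^*e^{-i(e_j-e_k)t}\tau_{jk}$ with $\tau_{jk}:=\tr_{\bar A}(|j\rangle\langle k|)$ an operator on $A$ satisfying $\tau_{jk}^\dagger=\tau_{kj}$. Taking the time average of $\|\psi_A(t)-\bar\psi_A\|_2^2=\tr_A[(\psi_A(t)-\bar\psi_A)^\dagger(\psi_A(t)-\bar\psi_A)]$ and invoking non-degenerate gaps as above yields $\overline{\|\psi_A(t)-\bar\psi_A\|_2^2}=\sum_{j\neq k}p_jp_k\|\tau_{jk}\|_2^2$. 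The crucial input is the sum rule $\sum_k\|\tau_{jk}\|_2^2=2^L$: writing $\|\tau_{jk}\|_2^2$ out in the computational basis of $A$ and summing over $k$, the completeness relation $\sum_k|k\rangle\langle k|=I$ collapses the complementary indices and leaves $2^L\langle j|j\rangle$. Symmetrizing once more with $p_jp_k\le(p_j^2+p_k^2)/2$ and $\|\tau_{jk}\|_2=\|\tau_{kj}\|_2$ gives $\overline{\|\psi_A(t)-\bar\psi_A\|_2^2}\le2^L\sum_jp_j^2=2^L/D^\textnormal{eff}_\psi$. Finally, $\|X\|_1\le\sqrt{2^L}\,\|X\|_2$ for operators on $A$, together with Jensen's inequality for the concave square root applied to the time average, gives $\overline{\|\psi_A(t)-\bar\psi_A\|_1}\le\sqrt{2^L\,\overline{\|\psi_A(t)-\bar\psi_A\|_2^2}}\le2^L/\sqrt{D^\textnormal{eff}_\psi}$.

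The real obstacle is bookkeeping of the powers of $2^L$, not any single hard estimate: one must resist the naive partial-trace contraction bound (which would cost a spurious factor $2^{N-L}$) in favour of the exact sum rule $\sum_k\|\tr_{\bar A}(|j\rangle\langle k|)\|_2^2=2^L$, and then spend precisely one factor $\sqrt{2^L}$ on the Hilbert--Schmidt-to-trace-norm conversion and one on Jensen, so that together they reproduce the claimed $2^L/\sqrt{D^\textnormal{eff}_\psi}$ and nothing weaker. Existence of all the infinite-time limits is automatic, since each integrand is a finite trigonometric polynomial in $t$ or a continuous (hence almost periodic) function of one.
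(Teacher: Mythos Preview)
The paper does not give its own proof of this theorem; it is quoted from the cited references \cite{Rei08, LPSW09, Sho11} and used as a black box. Your argument is correct and is precisely the standard route taken in those references: expand in the eigenbasis, use the non-degenerate-gap condition to collapse the time-averaged double sums onto the diagonal, and then apply the AM--GM step $p_jp_k\le(p_j^2+p_k^2)/2$ together with the row-sum bound $\sum_k|B_{jk}|^2\le\|B\|^2$ (Short's refinement of Reimann) for the first inequality, and the Hilbert--Schmidt computation plus $\|\cdot\|_1\le\sqrt{2^L}\,\|\cdot\|_2$ and Jensen (Linden--Popescu--Short--Winter) for the second.

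One cosmetic remark on your closing paragraph: the second factor of $\sqrt{2^L}$ does not come from Jensen (which is lossless in the dimension) but from the sum rule itself, which yields $\overline{\|\psi_A(t)-\bar\psi_A\|_2^2}\le 2^L/D^{\textnormal{eff}}_\psi$; the norm conversion then contributes the remaining $\sqrt{2^L}$. The mathematics in the body of your argument already has this right.
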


\begin{theorem} [\cite{Hua21PP}] \label{thm:ndg}
In the space of local Hamiltonians, the non-degenerate gap condition (\ref{eq:ndg}) is satisfied almost everywhere.
\end{theorem}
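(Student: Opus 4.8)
The plan is to realize the ``bad set'' --- the set of local Hamiltonians violating~(\ref{eq:ndg}) --- as the zero set of a single polynomial in the Hamiltonian's parameters, and then to exclude the one remaining possibility, that this polynomial vanishes identically, by exhibiting one local Hamiltonian with non-degenerate gaps.

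First I would build the polynomial. Fix a finite basis $\{h_a\}$ of the admissible local terms and write $H(\vec\lambda)=\sum_a\lambda_ah_a$, so that the matrix entries of $H(\vec\lambda)$ are affine-linear in $\vec\lambda\in\mathbb R^M$; put $d=2^N$ and let $e_1,\dots,e_d$ be the eigenvalues listed with multiplicity. For every unordered pair of distinct index-pairs $\{(j,k),(j',k')\}$ with $j\neq k$ and $j'\neq k'$, form the factor $\big((e_j-e_k)-(e_{j'}-e_{k'})\big)^2$, and let $Q$ be the product of all of them. By construction $Q\neq0$ exactly when the $d(d-1)$ gaps $\{e_j-e_k\}_{j\neq k}$ are pairwise distinct, i.e.\ exactly when~(\ref{eq:ndg}) holds (this automatically includes a simple spectrum, since a repeated eigenvalue produces two equal gaps). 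Squaring each factor makes $Q$ invariant under permutations of $e_1,\dots,e_d$, so $Q$ is a polynomial in the elementary symmetric functions of the eigenvalues --- equivalently in the coefficients of $\det(xI-H)$ --- hence a polynomial in the entries of $H$, and therefore a polynomial $Q(\vec\lambda)$ in the parameters. The zero set of a polynomial on $\mathbb R^M$ is either all of $\mathbb R^M$ or Lebesgue-null, so it now suffices to produce a single local Hamiltonian with $Q\neq0$.

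Constructing such an example is where the real work lies, since the obvious candidates fail: a sum of on-site terms has a Minkowski-sum spectrum riddled with coincident gaps, and any diagonal (classical) local Hamiltonian changes its energy under a single spin flip by an amount determined by a bounded neighbourhood, which again forces coincident gaps once $N$ is large. I would instead build an example on a chain with nearest-neighbour two-qubit terms by induction on the number of qubits, carrying the reinforced hypothesis that $H_N$ has non-degenerate gaps \emph{and} that the reduced states of its $d$ eigenstates on the last qubit are pairwise distinct; the base case (e.g.\ $H_1=Z_1$) is immediate. For the step, set $H_{N+1}=H_N\otimes I+\epsilon\,v$ with $v=\sum_{a,b}c_{ab}\,\sigma^a_N\sigma^b_{N+1}$ a generic term on the new bond and $\epsilon>0$ small. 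Tensoring with $I$ doubles every level, and first-order degenerate perturbation theory splits the $j$-th doublet by $\pm\epsilon\,\delta_j$, where $\delta_j=\|\vec c_0+C\vec m_j\|$ with $\vec c_0,C$ assembled from the $c_{ab}$ and $\vec m_j$ the Bloch vector of qubit $N$ in the $j$-th eigenstate of $H_N$. A short count of which gaps of $H_{N+1}$ coincide to leading order in $\epsilon$ shows that, for small $\epsilon$, non-degeneracy of all gaps reduces to the conditions $\delta_j\neq0$ and $\delta_j\neq\delta_k$ for $j\neq k$, while preservation of the reduced-state hypothesis reduces to the vectors $\vec c_0+C\vec m_j$ being pairwise non-parallel; because the $\vec m_j$ are pairwise distinct by the inductive hypothesis, all of these hold for generic $c_{ab}$, and then for all sufficiently small $\epsilon>0$. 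This closes the induction, and since any such chain is a special case of the local Hamiltonians in question, it finishes the proof.

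The main obstacle, as indicated, is this last step. The measure-theoretic part is soft once $Q$ is recognized as a genuine polynomial; the work is in showing that the local family is not entirely contained in the bad set, and the subtlety there is choosing an inductive invariant robust enough to survive the perturbative bookkeeping --- it must track exactly the data (the single-qubit reduced states) that a local coupling to the newly added qubit can act on, so that the level splittings it induces are generically nonzero and all distinct.
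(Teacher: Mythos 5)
You should first note that this paper contains no proof of Theorem \ref{thm:ndg}: it is imported as a black box from Ref.~\cite{Hua21PP}, so there is no internal argument to compare yours against; your proposal has to stand on its own, and in spirit it follows the natural route such a genericity statement demands (and that the cited work also takes): confine the bad set to the zero locus of a single polynomial in the couplings, then rule out identical vanishing by exhibiting one witness. Your reduction is sound. Squaring each factor does make $Q$ invariant under relabelling of the eigenvalues, so $Q$ is a polynomial in the coefficients of $\det(xI-H(\vec\lambda))$ and hence in $\vec\lambda$, and $Q\neq0$ is exactly condition (\ref{eq:ndg}) (including simplicity of the spectrum, as you observe); a nonzero polynomial on $\mathbb R^M$ vanishes on a Lebesgue-null set. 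The witness construction also checks out: with the inductive invariant that the Bloch vectors $\vec m_j$ of the last qubit in the $2^N$ eigenstates are pairwise distinct, the conditions $\delta_j\neq0$, $\delta_j\neq\delta_k$, and pairwise non-parallelism of the vectors $\vec c_0+C\vec m_j$ are each nonvanishing polynomial constraints on $(\vec c_0,C)$ precisely because the $\vec m_j$ are distinct (e.g.\ $C=I$, $\vec c_0=t(\vec m_j-\vec m_k)$ with $t$ large separates $\delta_j$ from $\delta_k$), so a generic coupling satisfies all finitely many of them; a case check confirms that any two distinct gaps of $H_{N+1}$ whose zeroth-order parts agree (which, by non-degeneracy of the gaps of $H_N$, happens only within a doublet pair or between the two intra-doublet gaps) are then separated at first order in $\epsilon$, and the invariant is re-established because the limiting reduced states $\pm\hat w_j$ are pairwise distinct.

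Three small points you should make explicit to close the argument. First, since there are finitely many gap differences and each is a continuous (indeed analytic, by Rellich perturbation theory for Hermitian families) function of $\epsilon$ that is nonzero at order $0$ or order $1$, a single sufficiently small $\epsilon>0$ makes all of them nonzero simultaneously; the same uniformity argument gives strict distinctness of the actual (slightly mixed) reduced states at finite $\epsilon$. Second, verify that under your generic conditions no pair of distinct gaps agrees at both zeroth and first order, since agreement at both orders would leave the comparison undecided. Third, the theorem quantifies over local Hamiltonians on a constant-dimensional hypercubic lattice, so you should place your nearest-neighbour chain inside that family, e.g.\ along a Hamiltonian path through the lattice, so that the witness is a point of the parameter space on which $Q$ is claimed not to vanish identically. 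With these remarks the proof is complete, and it has the virtue of being self-contained where the paper merely cites Ref.~\cite{Hua21PP}.
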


Let $|x^\pm\rangle,|y^\pm\rangle,|z^\pm\rangle$ be the eigenvectors of the Pauli matrices
\begin{equation}
\sigma^x=\begin{pmatrix}0&1\\1&0\end{pmatrix},\quad
\sigma^y=\begin{pmatrix}0&-i\\i&0\end{pmatrix},\quad
\sigma^z=\begin{pmatrix}1&0\\0&-1\end{pmatrix}
\end{equation}
with eigenvalues $\pm1$, respectively.

\begin{definition}
Let $\mathcal E$ be the uniform distribution over the set
\begin{equation}
S:=\{|x^\pm\rangle,|y^\pm\rangle,|z^\pm\rangle\}^{\otimes N}.
\end{equation}
\end{definition}

\begin{theorem} \label{thm:eff}
\begin{equation}
\E_{|\psi\rangle\sim\mathcal E}\frac1{D^\textnormal{eff}_\psi}\le(2/3)^N,\quad\Pr_{|\psi\rangle\sim\mathcal E}(D^\textnormal{eff}_\psi=e^{\Omega(N)})=1-e^{-\Omega(N)}.
\end{equation}
\end{theorem}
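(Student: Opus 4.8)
The plan is to deduce the whole statement from one elementary fact about a single qubit: the six one-qubit states in $S_1:=\{|x^\pm\rangle,|y^\pm\rangle,|z^\pm\rangle\}$ form a projective $2$-design. Concretely, set
\begin{equation}
M:=\frac16\sum_{|\phi\rangle\in S_1}(|\phi\rangle\langle\phi|)^{\otimes2}
\end{equation}
on $\mathbb C^2\otimes\mathbb C^2$. A short computation in the computational basis (the $z$-pair supplies the diagonal part $|00\rangle\langle00|+|11\rangle\langle11|$, and the $x$- and $y$-pairs together supply the rest of the symmetric structure) gives $M=\frac13\Pi$, where $\Pi=\frac12(I+\mathrm{SWAP})$ is the projector onto the symmetric subspace of $\mathbb C^2\otimes\mathbb C^2$. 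In particular $\|M\|=1/3$, and this is the only nontrivial input.

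For the first inequality I would write $|\langle j|\psi\rangle|^4=\tr\big[(|j\rangle\langle j|)^{\otimes2}(|\psi\rangle\langle\psi|)^{\otimes2}\big]$ and take the expectation over $\mathcal E$. Since a sample from $\mathcal E$ is a tensor product of $N$ independent one-qubit samples, pairing the $i$-th qubit of the first copy with the $i$-th qubit of the second copy yields $\E_{|\psi\rangle\sim\mathcal E}(|\psi\rangle\langle\psi|)^{\otimes2}=M^{\otimes N}$. Hence, for every eigenstate $|j\rangle$ of $H$,
\begin{equation}
\E_{|\psi\rangle\sim\mathcal E}\,|\langle j|\psi\rangle|^4=\langle j|^{\otimes2}M^{\otimes N}|j\rangle^{\otimes2}\le\|M^{\otimes N}\|=\|M\|^N=3^{-N},
\end{equation}
using multiplicativity of the operator norm under tensor products. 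Summing over the $2^N$ eigenstates gives $\E_{|\psi\rangle\sim\mathcal E}\,1/D^{\mathrm{eff}}_\psi=\E\sum_j|\langle j|\psi\rangle|^4\le(2/3)^N$.

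The second claim then follows from Markov's inequality applied to the nonnegative random variable $1/D^{\mathrm{eff}}_\psi$: the bound just proved gives $\Pr_{|\psi\rangle\sim\mathcal E}\big(1/D^{\mathrm{eff}}_\psi\ge(3/4)^N\big)\le(2/3)^N/(3/4)^N=(8/9)^N$, i.e. $\Pr_{|\psi\rangle\sim\mathcal E}\big(D^{\mathrm{eff}}_\psi>(4/3)^N\big)\ge1-(8/9)^N$, which is exactly $\Pr(D^{\mathrm{eff}}_\psi=e^{\Omega(N)})=1-e^{-\Omega(N)}$ with $e^{\Omega(N)}=(4/3)^N$.

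I do not expect a genuine obstacle: the only computational content is the finite single-qubit identity $M=\frac13\Pi$, and the one point that needs care is the tensor bookkeeping that turns the product-state expectation into $M^{\otimes N}$. If a quantitatively stronger concentration statement were wanted, one could instead run a bounded-difference argument on $\log D^{\mathrm{eff}}_\psi$ as a function of the $N$ independent one-qubit choices, but for the statement as written Markov's inequality already suffices.
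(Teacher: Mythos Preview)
Your proof is correct and follows essentially the same route as the paper: Theorem~\ref{thm:eff} is the special case $n(\rho)=N$ of Lemma~\ref{l:eff}, whose proof likewise rests on the single-qubit identity $\sum_{|\phi\rangle\in S_1}(|\phi\rangle\langle\phi|)^{\otimes2}=2\Pi_{\mathrm{sym}}$, the doubled-space expression $|\langle j|\psi\rangle|^4=\langle j|^{\otimes2}(|\psi\rangle\langle\psi|)^{\otimes2}|j\rangle^{\otimes2}$, and the bound $\Pi_{\mathrm{sym}}\le I$ (your $\|M\|=1/3$). The probability statement via Markov's inequality is also the intended completion.
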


This theorem can be proved in the same way as Lemma 5 in Ref.~\cite{HH19}. It is a special case of Lemma \ref{l:eff}, whose full proof is given below.

Theorem \ref{thm:1} follows from Theorems \ref{thm:equ}, \ref{thm:ndg}, \ref{thm:eff}.

\section{Results}

To extend Theorem \ref{thm:1} to finte temperature, it suffices to prove an analogue of Theorem \ref{thm:eff} for random product states at finite temperature.

Consider a constant-dimensional hypercubic lattice of $N$ sites, where each lattice site has a qubit. We expand the Hamiltonian
\begin{equation} \label{eq:H}
H=\sum_{P\in\mathcal P_N}c_PP
\end{equation}
in the Pauli basis, where $\mathcal P_N$ is the set of Pauli operators on $N$ qubits. Assume without loss of generality that $c_I=0$ so that $\tr H=0$. The support $\supp P$ of a Pauli operator is the set of qubits that $P$ acts non-trivially on. Suppose that $H$ is local and extensive in that
\begin{itemize}
\item $c_P=0$ if the diameter of $\supp P$ is greater than a certain constant $r$.
\item For any hypercubic region $R$ of side length $r$, there is at least one term in $H$ with $c_P\neq0$ such that $\supp P\subseteq R$.
\item $|c_P|=\Theta(1)$ if $c_P\neq0$.
\end{itemize}

Let
\begin{equation} \label{eq:gibbs}
\rho_\beta:=e^{-\beta H}/\tr(e^{-\beta H}),\quad E(\beta):=\tr(\rho_\beta H).
\end{equation}
be a thermal state and its energy at inverse temperature $\beta$.

\begin{definition} \label{def:mc}
Let $\mathcal E^\textnormal{mc}_\beta$ be the uniform distribution over the set
\begin{equation}
S_\beta:=\big\{|\psi\rangle\in S:\big|\langle\psi|H|\psi\rangle-E(\beta)\big|=o(N)\big\}.
\end{equation}
\end{definition}

\begin{theorem} \label{thm:main1}
There is a constant $C>0$ such that for any $|\beta|\le C$,
\begin{equation}
\E_{|\psi\rangle\sim\mathcal E^\textnormal{mc}_\beta}\frac1{D^\textnormal{eff}_\psi}=e^{-\Omega(N)},\quad\Pr_{|\psi\rangle\sim\mathcal E^\textnormal{mc}_\beta}(D^\textnormal{eff}_\psi=e^{\Omega(N)})=1-e^{-\Omega(N)}.
\end{equation}
\end{theorem}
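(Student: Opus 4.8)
The plan is to reduce the statement to a concentration estimate for $\sum_j |\langle j|\psi\rangle|^4$ when $|\psi\rangle$ is drawn uniformly from the energy shell $S_\beta$ rather than from all of $S$. The first step is to relate expectations over $\mathcal E^\textnormal{mc}_\beta$ to expectations over the unconstrained $\mathcal E$ via $\E_{\mathcal E^\textnormal{mc}_\beta}[\,\cdot\,] = \E_{\mathcal E}[\,\cdot\,\mid |\psi\rangle\in S_\beta] = \E_{\mathcal E}[(\,\cdot\,)\mathbf 1_{S_\beta}]/\Pr_{\mathcal E}(S_\beta)$. So I need two ingredients: (i) a lower bound $\Pr_{|\psi\rangle\sim\mathcal E}(|\psi\rangle\in S_\beta)\ge e^{-o(N)}$, i.e. the energy shell is not exponentially rare; and (ii) control of $\E_{\mathcal E}[D^\textnormal{eff}_\psi{}^{-1}\mathbf 1_{S_\beta}]$, which is at most $\E_{\mathcal E}[D^\textnormal{eff}_\psi{}^{-1}]\le (2/3)^N$ by Theorem \ref{thm:eff} (dropping the indicator only helps). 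Combining (i) and (ii) gives $\E_{\mathcal E^\textnormal{mc}_\beta}[D^\textnormal{eff}_\psi{}^{-1}]\le (2/3)^N e^{o(N)}=e^{-\Omega(N)}$ provided the $o(N)$ from (i) is beaten by the $\Omega(N)$ gap between $(2/3)^N$ and $1$ — which it is, since $(2/3)^N=e^{-\Omega(N)}$ with a fixed positive rate $\ln(3/2)$. The high-probability statement then follows from Markov's inequality applied to $D^\textnormal{eff}_\psi{}^{-1}$: if $\E_{\mathcal E^\textnormal{mc}_\beta}[D^\textnormal{eff}_\psi{}^{-1}]\le e^{-cN}$, then $\Pr(D^\textnormal{eff}_\psi{}^{-1}\ge e^{-cN/2})\le e^{-cN/2}$, i.e. $D^\textnormal{eff}_\psi=e^{\Omega(N)}$ with probability $1-e^{-\Omega(N)}$.

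The substantive work is therefore ingredient (i): showing that a random element of $S$ lands within $o(N)$ of the thermal energy $E(\beta)$ with probability at least $e^{-o(N)}$ — in fact I would aim for probability $1-o(1)$, or even $1-e^{-\Omega(N)}$, which is more than enough. Write $\langle\psi|H|\psi\rangle=\sum_P c_P\langle\psi|P|\psi\rangle$. For $|\psi\rangle\sim\mathcal E$ the single-site factors are i.i.d. uniform over the six Pauli eigenstates, so $\langle\psi|P|\psi\rangle$ is a product of $|\supp P|$ independent mean-zero bounded random variables (each $\langle\phi|\sigma^a|\phi\rangle\in\{0,\pm1\}$ has mean $0$ over the six states), hence $\E_{\mathcal E}\langle\psi|P|\psi\rangle=0$ for every non-identity $P$, giving $\E_{\mathcal E}\langle\psi|H|\psi\rangle=0$. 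Since $H$ is $r$-local with $\Theta(1)$ coefficients on a constant-dimensional lattice, $\langle\psi|H|\psi\rangle$ is a sum of $O(N)$ terms each bounded by $O(1)$, and — crucially — it is a function of the $N$ independent single-site choices that changes by only $O(1)$ when one site is resampled (each site sits in $O(1)$ terms). So by McDiarmid's bounded-difference inequality (or Azuma–Hoeffding on the Doob martingale), $\Pr_{\mathcal E}(|\langle\psi|H|\psi\rangle|\ge t)\le 2e^{-\Omega(t^2/N)}$, which for $t=N^{3/4}$, say, is $e^{-\Omega(\sqrt N)}$. Thus $\langle\psi|H|\psi\rangle=o(N)$ with overwhelming probability, and likewise one must note $E(\beta)=o(N)$ at high temperature: expanding $\rho_\beta$ to first order, $E(\beta)=-\beta\,\tr(H^2)/2^N+O(\beta^2\cdot\mathrm{poly})=O(\beta N)=o(N)$ for $|\beta|\le C$ (here using $\tr H=0$ and $\tr(H^2)=O(N)$ from locality). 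Hence $|\langle\psi|H|\psi\rangle-E(\beta)|=o(N)$ with probability $1-e^{-\Omega(\sqrt N)}$, which gives (i) with room to spare.

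Assembling: Theorem \ref{thm:ndg} supplies the non-degenerate gap condition almost everywhere, Theorem \ref{thm:equ} converts the lower bound on $D^\textnormal{eff}_\psi$ into exponentially small fluctuations $\Delta B$ and trace-distance equilibration of subsystems, and the two displayed bounds of Theorem \ref{thm:main1} are exactly (ii)-via-(i) plus Markov as above. The main obstacle I anticipate is purely bookkeeping: making the $o(N)$ in Definition \ref{def:mc} cooperate with the $e^{-\Omega(N)}$ targets — one must check that the shell probability loss is genuinely sub-exponential (it is $1-o(1)$, not merely $e^{-o(N)}$), so that dividing by $\Pr_{\mathcal E}(S_\beta)$ does not erode the $(2/3)^N$ decay; and one should confirm the concentration constant in McDiarmid is uniform in $\beta$ over the compact range $|\beta|\le C$, which it is since it depends only on the lattice geometry and $r$, not on $\beta$. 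A secondary point worth stating carefully is that $S_\beta$ is nonempty for all $|\beta|\le C$ and large $N$ — but this is immediate from the concentration argument, since a positive-probability event is in particular nonempty.
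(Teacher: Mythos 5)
Your overall framework --- writing $\E_{\mathcal E^\textnormal{mc}_\beta}[\,\cdot\,]$ as a conditional expectation over $\mathcal E$, dropping the indicator to invoke Theorem \ref{thm:eff}, and finishing with Markov's inequality --- is exactly the paper's skeleton. But ingredient (i), the claim that $\Pr_{|\psi\rangle\sim\mathcal E}(|\psi\rangle\in S_\beta)$ is $1-o(1)$ (or even $e^{-o(N)}$), is false for any fixed $\beta\neq0$, and the error is localized in the line ``$E(\beta)=O(\beta N)=o(N)$ for $|\beta|\le C$.'' Since $\tr H=0$ and the Pauli terms are orthogonal, $\tr(H^2)/2^N=\sum_P c_P^2=\Theta(N)$ by extensivity, so the first-order expansion you invoke gives $E(\beta)\approx-\beta\,\Theta(N)$, which for a \emph{constant} $\beta\neq0$ is $-\Theta(N)$, not $o(N)$. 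Your McDiarmid/Azuma argument correctly shows that $\langle\psi|H|\psi\rangle$ concentrates within $o(N)$ of its mean, but that mean is $0$, not $E(\beta)$; hence for fixed $\beta\neq0$ the shell $S_\beta$ of Definition \ref{def:mc} sits at linear distance from the typical energy and is exponentially rare under $\mathcal E$ (a standard large-deviation fact for the i.i.d.\ site variables). If $E(\beta)$ really were $o(N)$, the theorem would collapse to the infinite-temperature statement and there would be nothing to prove; the whole content of Theorem \ref{thm:main1} is coping with an energy shell that is exponentially small in measure.

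The paper closes precisely this gap quantitatively rather than qualitatively: it shows $E(\beta)=-O(\beta N)$ using exponential decay of correlations at high temperature (equation (\ref{eq:Ebeta})), and then lower-bounds $|S_\beta|=\Omega(6^{N(1-c|\beta|)})$ (equation (\ref{eq:cardSb})) by an explicit construction --- plant a product configuration achieving energy $\approx E(\beta)$ on a subregion of only $O(\beta N)$ sites (possible by Lemma \ref{l:prodene}), and let the remaining $N-O(\beta N)$ sites be essentially arbitrary (here a concentration bound of your McDiarmid type is what justifies ``almost any'' choice on the complement). The ratio $|S|/|S_\beta|=O(6^{c|\beta|N})$ is then exponentially large, but with a rate proportional to $|\beta|$, so choosing $|\beta|\le1/(5c)$ makes $(2/3)^N\cdot6^{c|\beta|N}=e^{-\Omega(N)}$. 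To repair your proposal you would need to replace (i) by a lower bound of the form $\Pr_{\mathcal E}(S_\beta)\ge e^{-\epsilon(\beta)N}$ with $\epsilon(\beta)\to0$ as $\beta\to0$, which is exactly what this construction provides; concentration around the infinite-temperature mean alone cannot give it.
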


Let $I_2$ be the identity matrix of order $2$ and
\begin{equation}
S^+:=\{|x^\pm\rangle,|y^\pm\rangle,|z^\pm\rangle,I_2/2\}^{\otimes N}.
\end{equation}

\begin{theorem} [Theorem 1.5 in Ref.~\cite{BLMT24}] \label{thm:un}
There is a constant $\beta_c>0$ such that for any $|\beta|\le\beta_c$, $\rho_\beta$ is separable:
\begin{equation}
\rho_\beta=\sum_{\rho\in S^+}p(\rho)\rho,
\end{equation}
where $p:S^+\to[0,1]$ is a probability distribution over $S^+$.
\end{theorem}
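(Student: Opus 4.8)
Since Theorem~\ref{thm:un} is quoted directly from \cite{BLMT24}, a bona fide proof is a pointer to that reference; here I outline the route I would take to establish it from scratch, and where I expect the work to concentrate.

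\textbf{Step 1: reduce to product mixtures with near-maximally-mixed factors.} The plan is first to weaken the target, replacing ``mixture over $S^+$'' by the cleaner claim that $\rho_\beta$ is a convex combination of \emph{arbitrary} product states each of whose single-qubit factors is close to $I_2/2$. This reduction rests on an elementary single-qubit observation: if $\rho=(I_2+n_x\sigma^x+n_y\sigma^y+n_z\sigma^z)/2$ with $\|\vec n\|_1:=|n_x|+|n_y|+|n_z|\le1$, then, writing $\epsilon_a\in\{+,-\}$ for the sign of $n_a$,
\begin{equation}
\rho=\sum_{a\in\{x,y,z\}}|n_a|\,|a^{\epsilon_a}\rangle\langle a^{\epsilon_a}|+(1-\|\vec n\|_1)\,\frac{I_2}{2},
\end{equation}
a genuine convex combination of the seven single-qubit operators $|x^\pm\rangle\langle x^\pm|,|y^\pm\rangle\langle y^\pm|,|z^\pm\rangle\langle z^\pm|,I_2/2$. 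Tensoring over the $N$ sites and expanding the product, any $\bigotimes_{i=1}^N\rho_i$ with $\|\vec n_i\|_1\le1$ for every $i$ lies in the convex hull of $S^+$, and hence so does any probabilistic mixture of such product states. Since a single-qubit marginal obeys only $\|\vec n\|_2\le1$, hence $\|\vec n\|_1\le\sqrt3$ in general, what remains is to exhibit a product-state decomposition of $\rho_\beta$ in which \emph{every} factor of \emph{every} product state that appears lies within $\ell^1$ Bloch distance $O(|\beta|)$ of $I_2/2$; then taking $|\beta|$ below a constant forces $\|\vec n_i\|_1\le1$ throughout and the reduction closes.

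\textbf{Step 2: build the decomposition by a high-temperature expansion.} This is the substantive part and the content of \cite{BLMT24}. I would expand $e^{-\beta H}=\sum_{m\ge0}(-\beta)^mH^m/m!$, write each $H^m$ as a sum over ordered tuples of the local terms $c_PP$, and organize the resulting operator products into connected clusters. For $|\beta|$ below a constant fixed by the locality range $r$ and $\max_P|c_P|$, the polymer/cluster expansion converges — high-temperature analyticity in the spirit of the Kuwahara--Kato--Brand\~ao bounds — so the connected contributions are exponentially suppressed in their size. The delicate point is that these contributions must be resummed into an \emph{honest} convex combination of product states: each ``defect cluster'' has to be attributed to a bounded set of single-qubit factors in such a way that (i) the accompanying weights remain nonnegative and (ii) the perturbed factors stay within $O(|\beta|)$ of $I_2/2$. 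Extracting genuine positivity from what is a priori a signed expansion is the crux, and I expect this positivity-preserving resummation to be the main obstacle; Step~1, by contrast, is routine bookkeeping.

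\textbf{An alternative.} I would also try a differential/interpolation argument in $\beta$: since $\rho_0=(I_2/2)^{\otimes N}$ is trivially in the convex hull of $S^+$, one could attempt to show convex-hull membership is preserved under the flow $\partial_\beta\rho_\beta=-(H-E(\beta))\rho_\beta$ for $|\beta|$ small, carrying along a running product-state decomposition and controlling how each infinitesimal step perturbs the local factors. The obstruction is the same as in Step~2 — keeping the mixture weights nonnegative while absorbing each update locally — so I expect this route not to avoid the main difficulty, only to repackage it.
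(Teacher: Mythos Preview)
The paper does not prove this theorem; it is quoted from \cite{BLMT24} and used as a black box, exactly as you note in your opening sentence. There is therefore no in-paper argument to compare your sketch against.

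For context, however, the paper's proof of Lemma~\ref{l:Allen} exposes some of the structure of the actual construction in \cite{BLMT24}: it is a sequential sampling scheme over $N$ iterations, each drawing one new qubit's state from $\{|x^\pm\rangle,|y^\pm\rangle,|z^\pm\rangle,I_2/2\}$ via an explicit algorithm (their Algorithm~6.3) with seven cases per step indexed by auxiliary Pauli operators $E,E_1,E_2$. This site-by-site iterative construction differs from both of your proposed routes --- a global Taylor/cluster expansion followed by a positivity-preserving resummation, and a flow in $\beta$ carrying a running decomposition --- though all three confront the same nonnegativity obstacle you correctly flag as the crux. Your Step~1 reduction (from the target $S^+$ to arbitrary product states whose single-qubit factors satisfy $\|\vec n_i\|_1\le1$) is correct and is a clean observation regardless of which route one pursues for Step~2.
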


\begin{corollary} [Theorem 1.1 in Ref.~\cite{BLMT24}] \label{cor}
For any $|\beta|\le\beta_c$, $\rho_\beta$ can be written as a distribution over product states in $S$.
\end{corollary}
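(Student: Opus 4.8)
The plan is to start from the separable decomposition $\rho_\beta=\sum_{\rho\in S^+}p(\rho)\rho$ furnished by Theorem \ref{thm:un} and to eliminate every occurrence of the maximally mixed single-qubit factor $I_2/2$ in favor of pure Pauli eigenstates. The elementary identity driving the argument is
\begin{equation}
\frac{I_2}{2}=\frac16\bigl(|x^+\rangle\langle x^+|+|x^-\rangle\langle x^-|+|y^+\rangle\langle y^+|+|y^-\rangle\langle y^-|+|z^+\rangle\langle z^+|+|z^-\rangle\langle z^-|\bigr),
\end{equation}
which holds since $|a^+\rangle\langle a^+|+|a^-\rangle\langle a^-|=I_2$ for each $a\in\{x,y,z\}$. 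Every element $\rho\in S^+$ is a tensor product $\rho=\bigotimes_{i=1}^N\rho_i$ in which each $\rho_i$ is either one of the six pure Pauli eigenstates or $I_2/2$; substituting the identity above for each factor equal to $I_2/2$ and multiplying out the tensor product writes $\rho$ as a convex combination of elements of $S$ with nonnegative weights summing to one.

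Performing this substitution in each term of $\sum_{\rho\in S^+}p(\rho)\rho$ and collecting like terms yields $\rho_\beta=\sum_{|\psi\rangle\in S}q(\psi)\,|\psi\rangle\langle\psi|$, where each $q(\psi)$ is a finite sum of products of an original probability $p(\rho)$ with factors $1/6$, hence nonnegative. Normalization $\sum_{|\psi\rangle\in S}q(\psi)=1$ follows because each replacement of an $I_2/2$ factor is trace preserving (the six weights $1/6$ sum to one) and $\sum_{\rho\in S^+}p(\rho)=1$ by Theorem \ref{thm:un}. Hence $q$ is a genuine probability distribution over $S$ and $\rho_\beta$ is a mixture of product states in $S$, as claimed.

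There is essentially no analytic obstacle: the argument is bookkeeping and applies verbatim for negative $\beta$, since Theorem \ref{thm:un} already covers the whole range $|\beta|\le\beta_c$. The only point demanding a moment's care is checking that merging the expanded tensor-product terms leaves the coefficients nonnegative and correctly normalized, which the explicit description of $q(\psi)$ makes transparent.
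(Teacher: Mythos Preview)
Your argument is correct and is essentially the same as the paper's: the paper defines $S_\rho=\{|\psi\rangle\in S:\langle\psi|\rho|\psi\rangle=2^{-n(\rho)}\}$ and observes $\rho=\frac{1}{|S_\rho|}\sum_{|\psi\rangle\in S_\rho}|\psi\rangle\langle\psi|$, which is exactly your identity $I_2/2=\tfrac16\sum_{a,\pm}|a^\pm\rangle\langle a^\pm|$ expanded across the $n(\rho)$ maximally mixed factors. The only cosmetic difference is that the paper packages the expansion via the set $S_\rho$ rather than writing out the substitution qubit by qubit.
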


This corollary can be regarded as a definition of random product states at sufficiently high temperature \cite{PC24}.

We now explicitly derive Corollary \ref{cor} from Theorem \ref{thm:un}. Let $n(\rho)$ be the number of ``$I_2/2$'' in $\rho\in S^+$ so that $\langle\psi|\rho|\psi\rangle\le2^{-n(\rho)}$ for any $|\psi\rangle\in S$. Let
\begin{equation}
S_\rho:=\{|\psi\rangle\in S:\langle\psi|\rho|\psi\rangle=2^{-n(\rho)}\},
\end{equation}
i.e., $|\psi\rangle\in S_\rho$ if and only if $|\psi\rangle$ and $\rho$ are the same on qubits where $\rho$ is not $I_2/2$. Thus,
\begin{equation}
|S_\rho|=6^{n(\rho)},\quad\rho=\frac1{|S_\rho|}\sum_{|\psi\rangle\in S_\rho}|\psi\rangle\langle\psi|.
\end{equation}

\begin{definition} \label{def:can}
The proof of Theorem \ref{thm:un} is constructive. Let $p$ be the probability distribution constructed in Ref.~\cite{BLMT24}. We sample and obtain $\rho\in S^+$ with probability $p(\rho)$. Then, we sample uniformly from $S_\rho$. This results in a probability distribution $\mathcal E^\textnormal{can}_\beta$ over $S$ so that
\begin{equation}
\rho_\beta=\E_{|\psi\rangle\sim\mathcal E^\textnormal{can}_\beta}|\psi\rangle\langle\psi|.
\end{equation}
\end{definition}

\begin{remark}
$\mathcal E^\textnormal{can}_\beta$ is defined only for $|\beta|\le\beta_c$ where the construction of $p$ in Ref.~\cite{BLMT24} is provably valid.
\end{remark}

\begin{theorem} \label{thm:main2}
\begin{equation}
\E_{|\psi\rangle\sim\mathcal E^\textnormal{can}_\beta}\frac1{D^\textnormal{eff}_\psi}=e^{-\Omega(N)},\quad\Pr_{|\psi\rangle\sim\mathcal E^\textnormal{can}_\beta}(D^\textnormal{eff}_\psi=e^{\Omega(N)})=1-e^{-\Omega(N)}.
\end{equation}
\end{theorem}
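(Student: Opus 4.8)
The plan is to bound $\E_{|\psi\rangle\sim\mathcal E^\textnormal{can}_\beta} 1/D^\textnormal{eff}_\psi$ and then extract the high-probability statement via Markov's inequality, exactly as in the passage from the first to the second half of Theorem \ref{thm:eff}. Recall that $1/D^\textnormal{eff}_\psi=\sum_j|\langle j|\psi\rangle|^4=\tr\big((|\psi\rangle\langle\psi|)^{\otimes 2}\,\Pi_{\mathrm{diag}}\big)$, where $\Pi_{\mathrm{diag}}=\sum_j |j\rangle\langle j|\otimes|j\rangle\langle j|$ is the projector onto the "diagonal" states in the doubled Hilbert space written in the eigenbasis of $H$. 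So the quantity to control is $\E_{|\psi\rangle\sim\mathcal E^\textnormal{can}_\beta}\tr\big((|\psi\rangle\langle\psi|)^{\otimes 2}\Pi_{\mathrm{diag}}\big)$, i.e. the second moment of the ensemble. The first step is therefore to understand the two-copy average $\Xi:=\E_{|\psi\rangle\sim\mathcal E^\textnormal{can}_\beta}(|\psi\rangle\langle\psi|)^{\otimes 2}$.

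The key observation is that $\mathcal E^\textnormal{can}_\beta$ is a two-stage mixture: first draw $\rho\in S^+$ with probability $p(\rho)$, then draw $|\psi\rangle$ uniformly from $S_\rho$. Conditioned on $\rho$, the state $|\psi\rangle$ is a product over sites, agreeing with $\rho$ on the $N-n(\rho)$ "frozen" sites and uniform over $\{|x^\pm\rangle,|y^\pm\rangle,|z^\pm\rangle\}$ on the $n(\rho)$ "free" sites. Hence the conditional two-copy average factorizes over sites: on each frozen site it is $(\varphi\varphi^\dagger)^{\otimes 2}$ for the corresponding single-qubit pure state $\varphi$, and on each free site it is $\E_{|\varphi\rangle\sim\{x^\pm,y^\pm,z^\pm\}}(|\varphi\rangle\langle\varphi|)^{\otimes 2}$, which is a fixed two-qubit operator $M$ (the symmetric-subspace-like average that already appears implicitly in Theorem \ref{thm:eff}; explicitly $M=\tfrac13(P_{\mathrm{sym}})$ restricted appropriately, with $\tr(M\cdot\mathrm{SWAP})$ and the relevant diagonal overlaps of size $O(1)$ per site). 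The point is that $M$ is "spread out": for any single-qubit projector $|b\rangle\langle b|$, $\langle b b| M |b b\rangle\le c$ for some constant $c<1$ — indeed $c=1/3$ from the computation $\sum_{|\varphi\rangle}|\langle b|\varphi\rangle|^4/6$ averaged, the same $(2/3)^N$-type bound as in Theorem \ref{thm:eff}. This gives, conditioned on $\rho$, a bound of the form $\tr\big(\Xi_\rho\,\Pi_{\mathrm{diag}}\big)\le (\text{const})^{n(\rho)}$ times a factor accounting for the frozen qubits.

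The heart of the argument is to show that a constant fraction of sites is free with overwhelming probability under $p$, i.e. $n(\rho)=\Omega(N)$ with probability $1-e^{-\Omega(N)}$, so that the conditional bound becomes $e^{-\Omega(N)}$ and then averages to $e^{-\Omega(N)}$. For this I would use Theorem \ref{thm:un}/Corollary \ref{cor} together with an entropy or trace argument: $\E_{\rho\sim p} 2^{-n(\rho)}\ge \E_{\rho\sim p}\langle\psi|\rho|\psi\rangle$ for a worst-case $|\psi\rangle\in S$ — more usefully, since $\sum_\rho p(\rho)\rho=\rho_\beta$ and $\rho_\beta$ has purity $\tr\rho_\beta^2=e^{-\Omega(N)}$ at high temperature (it is exponentially close to maximally mixed, as $\|\rho_\beta-I/2^N\|$ is controlled by $\beta$ and locality), one deduces that the mixture cannot be dominated by $\rho$'s with few free qubits; any $\rho$ with $n(\rho)=o(N)$ has $\tr\rho^2\ge 2^{-o(N)}$ in the frozen block, and a convexity/Cauchy–Schwarz estimate forces the total $p$-weight of such $\rho$ to be $e^{-\Omega(N)}$. (Alternatively, one can inspect the explicit construction of $p$ in Ref.~\cite{BLMT24} directly.) The main obstacle I anticipate is precisely this step — certifying $n(\rho)=\Omega(N)$ typically — because it requires either a quantitative high-temperature bound on $\tr\rho_\beta^2$ combined with a clean convexity argument, or peering into the internals of the \cite{BLMT24} construction; everything else (the site-factorization, the per-site constant $<1$, and the final Markov step to get both halves of the theorem) is routine and parallels the proof of Theorem \ref{thm:eff} and Lemma \ref{l:eff}.
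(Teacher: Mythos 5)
Your overall architecture matches the paper's: a conditional bound of the form $(2/3)^{n(\rho)}$ for the uniform average over $S_\rho$ (the paper's Lemma \ref{l:eff}, proved exactly by your symmetric-subspace / two-copy computation), plus the claim that $n(\rho)=\Omega(N)$ with probability $1-e^{-\Omega(N)}$ under $p$ (the paper's Lemma \ref{l:Allen}), followed by Markov. The gap is in your primary argument for the second ingredient, which you yourself flag as the heart of the matter. The deduction ``$\tr\rho_\beta^2=e^{-\Omega(N)}$ and $\sum_\rho p(\rho)\rho=\rho_\beta$, hence the $p$-weight of $\rho$ with $n(\rho)=o(N)$ is $e^{-\Omega(N)}$'' is false: low purity of a mixture puts no constraint on the purity of its components, because the cross terms $\tr(\rho\rho')$ can vanish. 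Concretely, the maximally mixed state $I/2^N$, which has the smallest possible purity $2^{-N}$, is exactly the uniform mixture of the $6^N$ pure product states in $S$, all of which have $n(\rho)=0$. More damningly, the statement you want is not a property of $\rho_\beta$ at all but of the particular decomposition $p$: for a Hamiltonian diagonal in the $z$ basis, $\rho_\beta=\sum_z e^{-\beta E_z}|z\rangle\langle z|/Z$ is a perfectly valid decomposition into product states in $S$, every one of which is an eigenstate of $H$ with $D^\textnormal{eff}_\psi=1$, so for that decomposition $\E\,1/D^\textnormal{eff}_\psi=1$. Hence no argument using only $\sum_\rho p(\rho)\rho=\rho_\beta$ together with spectral data of $\rho_\beta$ (purity, closeness to $I/2^N$, etc.) can establish the theorem; also the quantitative version of your Cauchy--Schwarz attempt only yields $\tr\rho_\beta^2\ge 2^{-o(N)}\sum_{\rho\in B}p(\rho)^2$, which is vacuous once the number of low-$n$ elements of $S^+$ is exponentially large.

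What you relegate to a parenthetical alternative --- ``inspect the explicit construction of $p$ in Ref.~\cite{BLMT24} directly'' --- is in fact the only viable route and is what the paper does: its Lemma \ref{l:Allen} examines Algorithm 6.3 of Ref.~\cite{BLMT24}, which assigns the qubits' states one at a time over $N$ iterations, and checks by enumerating the relevant Pauli restrictions that in each of the seven sampled cases at least one leads to the new qubit being assigned $I_2/2$, so that the per-qubit probability of ``$I_2/2$'' is bounded below by a positive constant; a Chernoff bound then gives $n(\rho)=\Omega(N)$ with probability $1-e^{-\Omega(N)}$. Your proposal as written leaves this as a pointer rather than an argument, so the key lemma is unproved; the remaining pieces (site factorization, the per-free-site constant $2/3$, and the final Markov step) are correct and coincide with the paper.
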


\section{Proofs}

\subsection{Proof of Theorem \ref{thm:main1}}

\begin{lemma} \label{l:prodene}
\begin{equation}
\min_{|\psi\rangle\in S}\langle\psi|H|\psi\rangle=-\Omega(N).
\end{equation}
\end{lemma}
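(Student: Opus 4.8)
The claim is that some product state in $S$ has energy $-\Omega(N)$. My plan is to exhibit such a state by a coarse-grained, block-by-block (or greedy/probabilistic) construction that exploits the third locality assumption: every hypercubic region $R$ of side length $r$ contains at least one Pauli term $P$ with $|c_P| = \Theta(1)$.

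First I would partition the lattice into $\Theta(N)$ disjoint hypercubic blocks, each of side length $2r$ (so that each block contains a concentric region of side length $r$), and in each such block $R$ fix one ``good'' term $P_R$ with $c_{P_R}\neq 0$ guaranteed by the extensivity hypothesis. On the support of $P_R$ (a set of $O(1)$ qubits), choose the single-qubit states from $\{|x^\pm\rangle,|y^\pm\rangle,|z^\pm\rangle\}$ that make $\langle\psi|P_R|\psi\rangle$ equal to $-\operatorname{sgn}(c_{P_R})$ — this is possible because $P_R$ is a tensor product of Pauli matrices on its support, and each single-qubit Pauli has a $\pm1$ eigenstate inside $\{|x^\pm\rangle,|y^\pm\rangle,|z^\pm\rangle\}$. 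Qubits not touched by any chosen $P_R$ can be assigned arbitrarily, say $|z^+\rangle$. This gives a concrete $|\psi\rangle\in S$.

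Next I would estimate $\langle\psi|H|\psi\rangle = \sum_P c_P \langle\psi|P|\psi\rangle$. The contribution of the chosen terms is $\sum_R c_{P_R}\langle\psi|P_R|\psi\rangle = -\sum_R |c_{P_R}| = -\Omega(N)$, using $|c_{P_R}| = \Theta(1)$ and that there are $\Theta(N)$ blocks. Every other term $P$ has $|c_P|=\Theta(1)$ or zero and bounded-diameter support, so $|\langle\psi|P|\psi\rangle|\le 1$ and the total number of nonzero terms is $O(N)$; hence $|\langle\psi|H|\psi\rangle|\le \sum_P |c_P| = O(N)$ — but this crude bound by itself does not give the sign. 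To actually get a net $-\Omega(N)$, I need the ``good'' terms to dominate. The cleanest fix is to choose the blocks large enough (side length $\ell r$ for a large constant $\ell$) and, within each block, to select a single good term and make all of its neighbors' contributions controllable: since each good Pauli $P_R$ overlaps only $O(1)$ other Pauli terms in $H$ (bounded range, bounded degree), fixing $|\psi\rangle$ on $\operatorname{supp} P_R$ affects at most $O(1)$ other terms, each contributing at most $\max|c_P| = O(1)$ in absolute value. If the good terms are chosen from widely separated blocks so their ``interaction neighborhoods'' are disjoint, the total damage is at most a constant factor times the gain, and by taking $\ell$ large we can ensure it is, say, at most half the gain — so $\langle\psi|H|\psi\rangle \le -\tfrac12\sum_R|c_{P_R}| = -\Omega(N)$.

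The main obstacle is precisely this sign control: naively, terms of $H$ not among the chosen good terms could conspire to cancel or overwhelm the negative contribution, since they are also $\Theta(1)$ in magnitude and there are $\Theta(N)$ of them. The resolution hinges on a locality/sparsity argument — each fixed good term interacts with only $O(1)$ others, and unchosen qubits can be set to make the remaining terms' expectations vanish or be bounded — so that after choosing a sufficiently sparse set of good terms the ``error'' is a controlled constant fraction of the ``signal.'' An alternative, possibly slicker, route is the probabilistic method: pick each single-qubit state in $|\psi\rangle$ independently and uniformly from $\{|x^\pm\rangle,\dots,|z^\pm\rangle\}$, note $\E\langle\psi|P|\psi\rangle = 0$ for every non-identity $P$ (so $\E\langle\psi|H|\psi\rangle=0$ and a tail bound gives $|\langle\psi|H|\psi\rangle|=O(\sqrt N)$ typically), then resample the $O(1)$ qubits in each good block to flip that term's sign to $-|c_{P_R}|$; a McDiarmid/Azuma bounded-differences estimate controls the perturbation to the rest. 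Either way the key quantitative input is the bounded interaction degree implied by the constant range $r$ and $|c_P|=\Theta(1)$, and I expect setting up the disjointness/independence bookkeeping to be the one genuinely delicate step.
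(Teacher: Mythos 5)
There is a genuine gap in both routes you sketch, and it sits exactly at the step you yourself flag as delicate. In the deterministic version, assigning the untouched qubits ``arbitrarily, say $|z^+\rangle$'' is fatal: the $\Theta(N)$ terms of $H$ supported entirely on those qubits (e.g., single-qubit $\sigma^z$ terms with positive coefficients) can each contribute $+\Theta(1)$, overwhelming the $-\Omega(N)$ gain, and enlarging the block side to $\ell r$ does nothing about them. Even for the terms overlapping a chosen $\supp P_R$, the claim that taking $\ell$ large makes the damage at most half the gain is incorrect: increasing $\ell$ shrinks the number of good blocks, hence the gain, proportionally, but leaves the per-block damage-to-gain ratio unchanged, and that ratio can exceed $1$. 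You also never address terms of $H$ whose support lies strictly inside $\supp P_R$; these survive any assignment of the remaining qubits and can be positive and $\Theta(1)$ per block.

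Your probabilistic alternative is essentially the right idea --- it is the paper's argument in disguise, since placing $I_2/2$ on a qubit is the same as averaging uniformly over the six eigenstates --- but the key computation is missing, and McDiarmid/Azuma is not what closes it. What is needed is: after fixing the good-block qubits, every term of $H$ that touches at least one still-random qubit has conditional expectation exactly zero (a uniformly random eigenstate kills any non-identity Pauli factor), so the only surviving contributions come from terms supported entirely inside some $\supp P_R$, and the separation between blocks together with diameter $\le r$ forces each such term to sit inside a single $\supp P_R$. These surviving terms are precisely where your sketch is silent, and they can spoil the sign; the paper disposes of them by choosing $P_R$ of minimal support within its region, so that any other term of $H$ contained in $\supp P_R$ must have the same support but differ in some Pauli factor, hence has zero expectation in the chosen eigenstate product. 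With these two observations the expectation of $\langle\psi|H|\psi\rangle$ over the remaining randomness is exactly $-\sum_R|c_{P_R}|=-\Omega(N)$, and the existence of a state in $S$ with energy at most this value follows from the expectation alone, with no concentration inequality; bounded differences controls fluctuations around the mean and cannot substitute for computing the mean, which is the actual content of the lemma.
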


\begin{proof}
We construct hypercubic regions $R_1,R_2,\ldots,R_m$ of side length $r$ such that the distance between any two of them is greater than $r$. Since $r=O(1)$, we can have $m=\Omega(N)$.

For each $R_j$, there is at least one Pauli operator, denoted by $P_j$, with $c_{P_j}\neq0$ such that $\supp P_j\subseteq R_j$. If there is more than one such Pauli operator, $P_j$ with the smallest $|\supp P_j|$ is chosen. Let $|\phi_j\rangle$ be a pure product state of qubits in $\supp P_j$ such that $\langle\phi_j|c_{P_j}P_j|\phi_j\rangle=-|c_{P_j}|$.

Let $\rho\in S^+$ be the tensor product of $|\phi_1\rangle,|\phi_2\rangle,\ldots,|\phi_m\rangle$ and the maximally mixed state on the rest of the system so that $\tr(\rho P)\neq0$ only if
\begin{equation}
\supp P\subseteq\bigcup_{j=1}^m\supp P_j.
\end{equation}
Since the supports of $P_1,P_2,\ldots,P_m$ are pairwise separated by distance $r$,
\begin{equation}
\tr(\rho H)=-\sum_{j=1}^m|c_{P_j}|=-\Omega(N).
\end{equation}
Since there exists $|\psi\rangle\in S_\rho\subseteq S$ such that $\langle\psi|H|\psi\rangle\le\tr(\rho H)$, we complete the proof.
\end{proof}

The Hamiltonian (\ref{eq:H}) can be written as
\begin{equation}
H=\sum_iH_i.
\end{equation}
The sum is over lattice sites. Each term $H_i$ is traceless $\tr H_i=0$ and has bounded operator norm $\|H_i\|=O(1)$ and is supported in a constant-radius neighborhood of site $i$. Let $d(i,j)$ be the distance between two lattice sites $i,j$. At sufficiently high temperature, the thermal state (\ref{eq:gibbs}) has exponential decay of correlations \cite{Ara69, BCP22, KGK+14}. Therefore,
\begin{multline} \label{eq:Ebeta}
\frac{\mathrm dE(\beta)}{\mathrm d\beta}=\tr^2(\rho_\beta H)-\tr(\rho_\beta H^2)=\sum_{i,j}\tr(\rho_\beta H_i)\tr(\rho_\beta H_j)-\tr(\rho_\beta H_iH_j)=-\sum_{i,j}O(e^{-\Omega(d(i,j))})\\
=-O(N)\implies E(\beta)=-O(\beta N).
\end{multline}

\begin{lemma}
There is a constant $c>0$ such that
\begin{equation} \label{eq:cardSb}
|S_\beta|=\Omega(6^{N(1-c|\beta|)}),\quad\forall\beta\in[-1/c,1/c].
\end{equation}
\end{lemma}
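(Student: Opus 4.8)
The plan is to estimate the size of the set $S_\beta$ of product states whose energy is within $o(N)$ of $E(\beta)$, by showing that a uniformly random product state from $S$ has energy concentrated around zero with $O(\sqrt N)$ fluctuations, and then exhibiting an explicit large sub-collection of product states whose energies tile the interval from $0$ down to $-\Omega(N)$, so that a $\Omega(|\beta|)$-fraction of sites suffices to reach energy $E(\beta) = -O(\beta N)$. Concretely, first I would use the region construction from Lemma \ref{l:prodene}: pick $m = \Omega(N)$ hypercubic regions $R_1,\dots,R_m$ of side length $r$, pairwise separated by distance $> r$, each carrying a Pauli term $P_j$ with $c_{P_j}\neq 0$. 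On each region we have at our disposal a local product state $|\phi_j\rangle$ with $\langle\phi_j|c_{P_j}P_j|\phi_j\rangle = -|c_{P_j}|$, and on the remaining sites inside $R_j$ (those not in $\supp P_j$) we are free to choose any of the six single-qubit states in $\{|x^\pm\rangle,|y^\pm\rangle,|z^\pm\rangle\}$.

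The key step is a counting/interpolation argument. Fix a target fraction $\theta = c'|\beta|$ of the $m$ regions, and in each of the chosen $\theta m$ regions put the energy-lowering state $|\phi_j\rangle$ while filling the rest of that region's sites freely ($6$ choices per site); in the remaining $(1-\theta)m$ regions and on all sites outside $\bigcup_j R_j$ we choose any of the $6$ single-qubit states. Because the regions are separated by distance $> r$ and all terms of $H$ have diameter $\le r$, a term $P$ can straddle at most one region, so the expectation $\langle\psi|H|\psi\rangle$ splits into a sum of contributions, one per region plus a "boundary/interstitial" part; the boundary part is bounded by $O(1)$ times the number of terms touching the region boundaries, hence $o(N)$ after suitable choice of $m$, while the within-region part from the $\theta m$ selected regions contributes $-\sum |c_{P_j}| = -\Theta(\theta N)$. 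Matching $-\Theta(\theta N)$ to $E(\beta) = -O(\beta N)$ (using \eqref{eq:Ebeta}) fixes $\theta = \Theta(|\beta|)$, and then one checks that within this construction the energy can be tuned continuously enough — by swapping energy-lowering states in or out region by region, each swap changing the energy by only $O(1)$ — to land in the window $|\langle\psi|H|\psi\rangle - E(\beta)| = o(N)$. Each such $|\psi\rangle$ lies in $S_\beta$, and the number of free single-qubit choices is at least $6$ raised to the power (number of unconstrained sites), which is $N - |\supp|$-many with $|\supp| = O(\theta N) = O(c'|\beta|N)$ constrained sites; this yields $|S_\beta| = \Omega(6^{N(1 - c|\beta|)})$ for an appropriate constant $c$, valid for all $|\beta|$ small enough, say $|\beta|\le 1/c$.

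The main obstacle I anticipate is the continuity/tunability part: I need to guarantee that the family of product states described above actually achieves energies filling the whole interval $[E(\beta) - o(N), E(\beta) + o(N)]$, not just energies near $-\Theta(\theta N)$ on average. The clean way to handle this is to order the $m$ regions, define $|\psi_k\rangle$ to be the state with energy-lowering choices in the first $k$ regions and a fixed default choice elsewhere, observe $|\langle\psi_{k+1}|H|\psi_{k+1}\rangle - \langle\psi_k|H|\psi_k\rangle| = O(1)$ since changing one region perturbs only the $O(1)$ terms supported near $R_{k+1}$, note $\langle\psi_0|H|\psi_0\rangle = o(N)$ (a random-like default product state has small energy, or one simply bounds it by the number of terms, which we can arrange to be $o(N)$ by taking $m$ slightly sublinear if needed) while $\langle\psi_m|H|\psi_m\rangle = -\Omega(N)$, and invoke the discrete intermediate value property to find some $k^\star$ with $\langle\psi_{k^\star}|H|\psi_{k^\star}\rangle$ within $O(1)$ of $E(\beta)$; then the $6$-fold freedom on the unconstrained sites of the first $k^\star$ regions and all sites outside the regions gives the stated exponential count. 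A secondary technical point is bounding the boundary contribution and ensuring $k^\star = O(|\beta| N)$ so that the number of constrained sites is $O(|\beta|N)$, which follows from $E(\beta) = -O(\beta N)$ together with the $\Theta(1)$ per-region energy gain.
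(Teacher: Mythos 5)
Your overall route is the same as the paper's: constrain $O(|\beta|N)$ sites using the construction of Lemma \ref{l:prodene}, tune the constrained part's energy to within $O(1)$ of $E(\beta)$ by a discrete intermediate-value argument with $O(1)$ steps (the paper does this on a single hyperrectangle $A$ of $O(\beta N)$ sites rather than on a $\Theta(|\beta|)$-fraction of scattered regions, which is cosmetic), and then count the $6^{N-O(|\beta|N)}$ assignments of the remaining sites. The genuine gap is in that last count: your assertion that ``each such $|\psi\rangle$ lies in $S_\beta$'' is false as written. The unconstrained sites support $\Theta(N)$ Hamiltonian terms, so varying their single-qubit states moves $\langle\psi|H|\psi\rangle$ over a window of width $\Theta(N)$; adversarial free choices (e.g., rerunning the Lemma \ref{l:prodene} construction on the free region) produce energies of order $-\Omega(N)$, far outside the $o(N)$ window around $E(\beta)$ when $|\beta|$ is small. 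What is true --- and what the paper's phrase ``almost any product state of $\bar A$'' encodes --- is that for all but a vanishing fraction of the free assignments the free-part contribution is $o(N)$: averaging a site over the six states gives $I_2/2$, so over independent uniform site choices the free-part energy has the intended mean and variance $O(N)$, and Chebyshev (or a Chernoff-type bound if you want an exponentially small exceptional fraction) gives $O(\sqrt N)$ typical fluctuations. You announce exactly this concentration in your opening plan but never invoke it where it is needed; with it, $|S_\beta|\ge(1-o(1))\,6^{N-O(|\beta|N)}$ and the lemma follows.

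A secondary error sits in your fallback justification that $\langle\psi_0|H|\psi_0\rangle=o(N)$ because one can ``bound it by the number of terms, which we can arrange to be $o(N)$ by taking $m$ slightly sublinear.'' The number of Hamiltonian terms acting on the default sites is $\Theta(N)$ no matter how many regions you select, so this deterministic bound yields only $O(N)$; worse, taking $m=o(N)$ caps the reachable constrained energy at $-O(m)=-o(N)$, so the interpolation could no longer reach $E(\beta)$, whose magnitude can be as large as $\Theta(\beta N)$. The correct justification for the default state is again the concentration (or averaging) argument: a typical choice, hence some choice, of default assignment has energy $o(N)$.
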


\begin{proof}
Assume without loss of generality that $\beta\ge0$. If $c$ is sufficiently large, $\beta$ is sufficiently small so that (\ref{eq:Ebeta}) applies. We divide the lattice into two hyperrectangular regions $A$ and $\bar A$. By applying Lemma \ref{l:prodene} to region $A$, we can have a product state of $A$ such that the energy of $A$ is $<E(\beta)$, as long as $A$ has $O(\beta N)$ sites for a sufficiently large constant hidden in the Big-O notation. Therefore, there exists a product state of $A$ such that the energy of $A$ is $E(\beta)\pm O(1)$. Combining this product state of $A$ with almost any product state of $\bar A$ results in a product state whose total energy is $E(\beta)\pm o(N)$. Thus, $|S_\beta|=\Omega(6^{|\bar A|})$, where $|\bar A|$ is the number of sites in $\bar A$.
\end{proof}

We are ready to prove Theorem \ref{thm:main1}:
\begin{equation}
\E_{|\psi\rangle\sim\mathcal E^\textnormal{mc}_\beta}\frac1{D^\textnormal{eff}_\psi}=
\frac1{|S_\beta|}\sum_{|\psi\rangle\in S_\beta}\frac1{D^\textnormal{eff}_\psi}\le\frac1{|S_\beta|}\sum_{|\psi\rangle\in S}\frac1{D^\textnormal{eff}_\psi}=\frac{|S|}{|S_\beta|}\E_{|\psi\rangle\sim\mathcal E}\frac1{D^\textnormal{eff}_\psi}\le(2/3)^N\times O(6^{cN|\beta|})=e^{-\Omega(N)}
\end{equation}
if $|\beta|\le\frac1{5c}$.

\subsection{Proof of Theorem \ref{thm:main2}}

\begin{lemma} \label{l:Allen}
\begin{equation}
\sum_{\rho\in S^+:n(\rho)=\Omega(N)}p(\rho)=1-e^{-\Omega(N)}.
\end{equation}
\end{lemma}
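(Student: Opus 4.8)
The plan is to read the estimate off the explicit construction of $p$ in Ref.~\cite{BLMT24}. The key feature of that construction is that it is \emph{spatially local}: whether the single-qubit factor $\rho_i$ equals $I_2/2$ is decided by a random process whose behavior in the regime $|\beta|\le\beta_c$ is controlled by the convergent cluster (polymer) expansion underlying the proof of Theorem~\ref{thm:un}. In particular the ``active'' set $M(\rho):=\{i:\rho_i\neq I_2/2\}$ is contained in the union of a random family of connected clusters, and at high temperature this family is subcritical: a given site is active only with probability $O(\beta)$, activation events have only finite-range dependence, and cluster sizes have exponential tails. My first task is to transcribe exactly this structural statement, with explicit constants, from Ref.~\cite{BLMT24}.

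Granting the structure, the lemma follows from a standard large-deviation argument. The number of active sites is at most the total size of the randomly selected clusters, which in a convergent cluster expansion is an exponentially concentrated extensive observable; quantitatively, $\E_{\rho\sim p}|M(\rho)|=\sum_i\Pr_{\rho\sim p}(i\ \text{is active})=O(\beta N)$, and a Chernoff-type bound for sums of finitely-dependent variables with exponential tails gives $\Pr_{\rho\sim p}\big(|M(\rho)|\ge(1-\gamma)N\big)\le e^{-\Omega(N)}$ for a constant $\gamma\in(0,1)$ — i.e.\ $\sum_{\rho:\,n(\rho)\ge\gamma N}p(\rho)=1-e^{-\Omega(N)}$, which is the claim. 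Equivalently one can finish via the bounded-differences (McDiarmid/Azuma) inequality: $\rho$ is a function of the $O(N)$ local random choices of the construction, each of which changes $n(\rho)=N-|M(\rho)|$ by $O(1)$, so $n(\rho)$ concentrates around its mean, and that mean is $\Omega(N)$ — one sees this directly from $\E_{\rho\sim p}|M(\rho)|=O(\beta N)$, or from $\E_{\rho\sim p}n(\rho)=\E_{\rho\sim p}S(\rho)=S(\rho_\beta)-\chi\ge S(\rho_\beta)-H(p)=N-o(N)$, where $\chi$ is the Holevo information of $\{p(\rho),\rho\}_\rho$, using the near-maximal entropy $S(\rho_\beta)=N-O(\beta^2N)$ of a high-temperature Gibbs state, the Holevo bound $\chi\le H(p)$, and $H(p)=o(N)$ (most sites are deterministically set to $I_2/2$).

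The main obstacle is the first step: faithfully extracting the probabilistic structure of BLMT's construction and checking that it supplies \emph{both} a mean bound $\E_{\rho\sim p}|M(\rho)|\le(1-\gamma)N$ \emph{and} enough independence / tail decay to upgrade it to an $e^{-\Omega(N)}$ concentration bound — a bare first-moment argument only yields the much weaker $\Pr_{\rho\sim p}(n(\rho)<\gamma N)=o(1)$ through reverse Markov, which is not enough for the lemma as stated. A secondary point is that the relevant constants must stay bounded away from their critical values throughout $|\beta|\le\beta_c$; since the downstream application only needs ``sufficiently high temperature,'' it in any case suffices to prove the lemma for $|\beta|$ below a fixed constant, which sidesteps that issue.
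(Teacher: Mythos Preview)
Your plan and the paper's proof share the same endgame --- a Chernoff-type concentration bound --- but differ in what structural input they extract from Ref.~\cite{BLMT24}. You posit a polymer-model picture (the active set $M(\rho)$ is a subcritical union of clusters, with $O(\beta)$ activation probability, finite-range dependence, exponential tails) and plan to concentrate the total cluster size. The paper instead reads the construction as a \emph{sequential} qubit-by-qubit sampler (Algorithm~6.3 of Ref.~\cite{BLMT24}): at each step the algorithm draws one of seven cases with comparable (constant) probabilities, and a direct enumeration over the local Pauli data $(E_{0,j},E_{1,j},E_{2,j})\in\{I_2,\sigma^x,\sigma^y,\sigma^z\}^3$ shows that at least one of the seven cases always outputs $E'_j=I_2$. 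Hence $\Pr(\text{qubit }j\text{ gets }I_2/2\mid\text{history})\ge c>0$ uniformly in $\beta$ and in the history, so $n(\rho)$ stochastically dominates a $\mathrm{Binomial}(N,c)$ and Chernoff finishes. This is shorter than your route, does not require verifying any spatial-independence or cluster-tail hypotheses, and produces a $\beta$-independent constant --- whereas your sketch yields $\gamma=\gamma(\beta)$ and must separately argue uniformity on $|\beta|\le\beta_c$.

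Your alternative entropic argument has a genuine gap: the step $H(p)=o(N)$ (``most sites are deterministically set to $I_2/2$'') is not justified and is in fact false for the BLMT24 distribution, which makes $\Theta(1)$ bits of random choice at every one of the $N$ steps; with $H(p)=\Theta(N)$ the inequality $\E n(\rho)\ge S(\rho_\beta)-H(p)$ is vacuous. Your primary cluster-expansion line would likely go through once the structure is pinned down, but the paper's per-step lower bound on $\Pr(I_2/2)$ is the cleaner and more robust input.
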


\begin{proof}
A scheme for sampling $\rho\in S^+$ with probability $p(\rho)$ is given in Ref.~\cite{BLMT24}. The scheme has $N$ iterations. In each iteration, it samples the state of a new qubit from $\{|x^\pm\rangle,|y^\pm\rangle,|z^\pm\rangle,I_2/2\}$. It suffices to prove that the probability of sampling $I_2/2$ is always at least a positive constant. Then, the lemma follows from the Chernoff bound.

The procedure for sampling the state of a new qubit is Algorithm 6.3 in Ref.~\cite{BLMT24}. Please refer to that algorithm while reading the rest of the proof. In that algorithm, $j$ is the index of the qubit being sampled in the current step; $E,E_1,E_2$ are Pauli operators (not energy). In step 3, they sample from seven cases, each of which has a constant probability. The probability may be adjusted in Step 8, but the adjustment is only a positive constant factor. $E'$ is a Pauli operator computed from $E,E_1,E_2$. The formula for computing $E'$ is different in different cases. $E'_j\in\{I_2,\sigma^x,\sigma^y,\sigma^z\}$ is the restriction of $E'$ to qubit $j$ and determines which state to sample for qubit $j$.

Let $E_{0,j},E_{1,j},E_{2,j}$ be the restriction of $E,E_1,E_2$ to qubit $j$, respectively. $E'_j$ can be calculated from $E_{0,j},E_{1,j},E_{2,j}$ using the formulas in Step 3. By enumerating $(E_{0,j},E_{1,j},E_{2,j})\in\{I_2,\sigma^x,\sigma^y,\sigma^z\}^3$, we observe that among the seven cases, at least one of them leads to $E'_j=I_2$. Hence, the probability of sampling $I_2/2$ is always at least a positive constant.
\end{proof}

\begin{lemma} \label{l:eff}
For any $\rho\in S^+$,
\begin{equation}
\frac1{|S_\rho|}\sum_{|\psi\rangle\in S_\rho}\frac1{D^\textnormal{eff}_\psi}\le(2/3)^{n(\rho)}.
\end{equation}
\end{lemma}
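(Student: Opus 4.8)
The plan is to bound the average of $1/D^\textnormal{eff}_\psi$ over $S_\rho$ by the two-copy (second moment) method. With $\{|j\rangle\}$ the eigenbasis of $H$, write $1/D^\textnormal{eff}_\psi=\sum_j|\langle j|\psi\rangle|^4=\sum_j\tr\big[(|j\rangle\langle j|)^{\otimes2}(|\psi\rangle\langle\psi|)^{\otimes2}\big]$, so the task reduces to computing $\E_{|\psi\rangle}(|\psi\rangle\langle\psi|)^{\otimes2}$ for $|\psi\rangle$ uniform over $S_\rho$. Let $G$ be the set of $N-n(\rho)$ qubits on which $\rho$ is a Pauli eigenstate and $F$ the set of $n(\rho)$ qubits on which $\rho=I_2/2$; then a uniformly random $|\psi\rangle\in S_\rho$ equals $|\chi_G\rangle\otimes\bigotimes_{i\in F}|\phi_i\rangle$, where $|\chi_G\rangle$ is the fixed product state prescribed by $\rho$ and the $|\phi_i\rangle$ are independent and uniform over the six single-qubit Pauli eigenstates. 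Reorganizing the $2N$ tensor factors into $N$ ``doubled qubits'', $(|\psi\rangle\langle\psi|)^{\otimes2}=\bigotimes_i(|\psi_i\rangle\langle\psi_i|)^{\otimes2}$.

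The key input is that the six Pauli eigenstates form a single-qubit projective $2$-design, so $\E_{\phi_i}(|\phi_i\rangle\langle\phi_i|)^{\otimes2}=(I_i+S_i)/6=\tfrac13\Pi_i$, where $S_i$ swaps the two copies of qubit $i$ and $\Pi_i=(I_i+S_i)/2$ is the projector onto the symmetric subspace of that doubled qubit. By independence, $\E_{|\psi\rangle}(|\psi\rangle\langle\psi|)^{\otimes2}=3^{-n(\rho)}\,\Pi_F\otimes(|\chi_G\rangle\langle\chi_G|)^{\otimes2}$ with $\Pi_F:=\bigotimes_{i\in F}\Pi_i$. Since $(|\chi_G\rangle\langle\chi_G|)^{\otimes2}$ is the rank-one projector onto $|\chi_G\rangle^{\otimes2}$, contracting $(|j\rangle\langle j|)^{\otimes2}$ against it on the qubits of $G$ turns the $|j\rangle$-term into $3^{-n(\rho)}\langle\xi_j\xi_j|\Pi_F|\xi_j\xi_j\rangle$, where $|\xi_j\rangle:=\langle\chi_G|j\rangle\in\mathcal H_F$ (of norm at most $1$) is obtained from $|j\rangle$ by projecting the $G$ qubits onto $|\chi_G\rangle$.

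Two elementary facts then finish it. First, $\Pi_F\preceq I$, so $\langle\xi_j\xi_j|\Pi_F|\xi_j\xi_j\rangle\le\|\xi_j\|^4$. Second --- the point that makes the counting come out --- completeness of $\{|j\rangle\}$ gives $\sum_j|\xi_j\rangle\langle\xi_j|=\langle\chi_G|\big(\sum_j|j\rangle\langle j|\big)|\chi_G\rangle=I_F$, hence $\sum_j\|\xi_j\|^2=\tr I_F=2^{n(\rho)}$, while each $\|\xi_j\|^2=\langle j|\big(I_F\otimes|\chi_G\rangle\langle\chi_G|\big)|j\rangle\le1$. Thus $\sum_j\|\xi_j\|^4\le(\max_j\|\xi_j\|^2)\sum_j\|\xi_j\|^2\le2^{n(\rho)}$, giving
\[
\frac1{|S_\rho|}\sum_{|\psi\rangle\in S_\rho}\frac1{D^\textnormal{eff}_\psi}=\sum_j\frac{\langle\xi_j\xi_j|\Pi_F|\xi_j\xi_j\rangle}{3^{n(\rho)}}\le\frac{\sum_j\|\xi_j\|^4}{3^{n(\rho)}}\le(2/3)^{n(\rho)}.
\]
Taking $G=\emptyset$ recovers precisely the computation behind Theorem~\ref{thm:eff}.

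I expect the only real obstacle to be bookkeeping: carefully reorganizing the $2N$ tensor factors so that the $2$-design identity applies one doubled qubit at a time on $F$ while the rank-one $|\chi_G\rangle^{\otimes2}$ on $G$ cleanly yields the contracted vectors $|\xi_j\rangle$ together with the crucial relation $\sum_j|\xi_j\rangle\langle\xi_j|=I_F$. A variant avoids $\Pi_F$ entirely: expand $\bigotimes_{i\in F}(I_i+S_i)=\sum_{T\subseteq F}S_T$, use $\tr\big[(\tr_{F\setminus T}|\xi_j\rangle\langle\xi_j|)^2\big]\le(\tr|\xi_j\rangle\langle\xi_j|)^2=\|\xi_j\|^4$ for each of the $2^{n(\rho)}$ summands, and conclude identically.
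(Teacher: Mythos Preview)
Your proof is correct and is essentially the same argument as the paper's: both use the two-copy identity that the six Pauli eigenstates average to $\tfrac13\Pi_{\mathrm{sym}}$ on each ``free'' qubit, bound $\Pi_F\preceq I$, and then use $\|\xi_j\|^2=\langle j|(|\chi_G\rangle\langle\chi_G|\otimes I_F)|j\rangle\le1$ together with $\sum_j\|\xi_j\|^2=\tr I_F=2^{n(\rho)}$ to finish. The only differences are cosmetic---the paper works directly with the scalars $\langle j|(|\phi\rangle_A\langle\phi|_A\otimes I_{\bar A})|j\rangle$ rather than naming the contracted vectors $|\xi_j\rangle$, and does not invoke the $2$-design language explicitly.
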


\begin{proof}
Let $A$ be the set of qubits where $\rho$ is a pure state (not $I_2/2$) and $\bar A$ be the rest of the system. Then, states in $S_\rho$ can be parameterized as
\begin{equation}
|\phi\rangle_A\otimes\bigotimes_{j\in\bar A}|\phi_j\rangle,\quad|\phi_j\rangle\in\{|x^\pm\rangle,|y^\pm\rangle,|z^\pm\rangle\},
\end{equation}
where $|\phi\rangle_A$ is a fixed pure product state of subsystem $A$ determined by $\rho$.

It can be directly verified that
\begin{multline}
(|x^+\rangle\langle x^+|)^{\otimes2}+(|x^-\rangle\langle x^-|)^{\otimes2}+(|y^+\rangle\langle y^+|)^{\otimes2}+(|y^-\rangle\langle y^-|)^{\otimes2}+(|z^+\rangle\langle z^+|)^{\otimes2}+(|z^-\rangle\langle z^-|)^{\otimes2}\\
=2\Pi_\textnormal{sym},
\end{multline}
where $\Pi_\textnormal{sym}=\Pi_\textnormal{sym}^2$ is the projector onto the symmetric subspace of two qubits. Therefore,
\begin{align}
&\sum_{|\psi\rangle\in S_\rho}\big|\langle j|\psi\rangle\big|^4=\sum_{|\psi\rangle\in S_\rho}\langle j|^{\otimes2}(|\psi\rangle\langle\psi|)^{\otimes2}|j\rangle^{\otimes2}=\langle j|^{\otimes2}\sum_{|\psi\rangle\in S_\rho}(|\psi\rangle\langle\psi|)^{\otimes2}|j\rangle^{\otimes2}\nonumber\\
&=\langle j|^{\otimes2}\big((|\phi\rangle_A\langle\phi|_A)^{\otimes2}\otimes\bigotimes_{j\in\bar A}\sum_{|\phi_j\rangle}(|\phi_j\rangle\langle\phi_j|)^{\otimes2}\big)|j\rangle^{\otimes2}=\langle j|^{\otimes2}\big((|\phi\rangle_A\langle\phi|_A)^{\otimes2}\otimes(2\Pi_\textnormal{sym})^{\otimes n(\rho)}\big)|j\rangle^{\otimes2}\nonumber\\
&\le2^{n(\rho)}\langle j|^{\otimes2}\big((|\phi\rangle_A\langle\phi|_A)^{\otimes2}\otimes I_{\bar A}^{\otimes2}\big)|j\rangle^{\otimes2}\le2^{n(\rho)}\langle j|(|\phi\rangle_A\langle\phi|_A\otimes I_{\bar A})|j\rangle,
\end{align}
where $I_{\bar A}$ is the identity operator on $\bar A$. Summing over $j$,
\begin{equation}
\frac1{|S_\rho|}\sum_{|\psi\rangle\in S_\rho}\frac1{D^\textnormal{eff}_\psi}\le\frac1{6^{n(\rho)}}\sum_j2^{n(\rho)}\langle j|(|\phi\rangle_A\langle\phi|_A\otimes I_{\bar A})|j\rangle=(2/3)^{n(\rho)}.
\end{equation}
\end{proof}

Theorem \ref{thm:main2} follows directly from Lemmas \ref{l:Allen}, \ref{l:eff}.

\section*{Acknowledgments}

I would like to thank Aram W. Harrow for collaboration on closely related work \cite{HH19}. This work was supported by the Army Research Office (grant no.~W911NF-21-1-0262).

\printbibliography

\end{document}